\documentclass[10pt,twocolumn]{article}
\usepackage[utf8]{inputenc}
\usepackage[T1]{fontenc}
\usepackage{amsmath, amssymb, amsthm}
\usepackage{graphicx}
\usepackage{xcolor}
\usepackage{hyperref}
\usepackage{comment}
\usepackage{tabularx}

\usepackage[margin=2cm]{geometry}

\newcommand{\DME}{\textsc{DME}}

\newcommand{\NOP}{\text{NOP}}

\newtheorem{theorem}{Theorem}
\newtheorem{corollary}{Corollary}[theorem]

\title{Divergent Multi-Version Execution (DME):\\
Canonical Instruction-Trace Fault Detection via Structural Address-Space Decorrelation}

\author{Petro Baran Yr\\
Independent Researcher, Uzhgorod\\
\small 2026}

\date{}

\begin{document}

\maketitle

\begin{abstract}

Traditional redundancy (lockstep, TMR) executes identical binaries with identical memory layouts. A single correlated fault — for example, an arbitrary program counter value or a perturbation \(\Delta PC\) in all replicas — redirects all replicas along the same incorrect path. The same applies to corruption of data pointers. Both types of faults, regardless of their origin (deliberate tampering, software bug, compilation bug, or physical disturbance), cause silent data corruption and erroneous program execution.
This work presents Divergent Multi-Version Execution (DME) — a \textbf{runtime semantic consistency verifier} for diversified executions.  Each replica is compiled independently, producing different code and data memory layouts while preserving identical semantics. Faults are detected by comparing canonical instruction traces, which include opcodes, register identifiers, loaded/stored values, and results — while discarding layout-dependent addresses.

Under fault-free execution, all replicas produce identical canonical traces. Any fault related to erroneous code or data addresses causes trace divergence and fault detection.

\paragraph{Key features:}
\begin{itemize}
    \item Semantic execution trace monitoring.
    \item Structural correctness monitoring during execution.
\end{itemize}

Depending on the chosen detection method (semantic comparison only, or combined with structural correctness monitoring), DME provides either deterministic or probabilistic fault detection guarantees.

\paragraph{Deterministic detection:}
\begin{itemize}
    \item Instruction pointer corruption is guaranteed detectable within one instruction within the fine-grained NOP-decorrelated region.
    \item \(PC_1 = PC_2 = \dots = PC_N = \text{const}\) error — all replicas' program counters become set to the same erroneous value: guaranteed detectable via structural address equality violation.
    \item \(*p_1 = *p_2 = \dots = *p_N = \text{const}\) error — all replicas' data pointers become set to the same erroneous value: guaranteed detectable via structural address equality violation.
    \item Data divergence in one or \(N-1\) replicas: deterministic detection.
\end{itemize}

\paragraph{Probabilistic detection:}
These are faults where program counters or data pointers diverge to different erroneous values across replicas (e.g., \(PC_1 \neq PC_2 \neq \dots \neq PC_N\) or corruption occurs in only one replica or \(N-1\) replicas). Such faults are not covered by the deterministic regime and are detected with probability bounded by exponential decay. For typical 32-bit ISAs and a 4\,kB memory region, the per-step probability of remaining undetected is bounded by \(\varepsilon \leq 2^{-52}\).
This upper bound is determined statically from the program structure, including the control-flow graph, instruction distribution, and memory layout. Consequently, probabilistic fault-detection guarantees can be evaluated at compile time for a given fault model, enabling verification that the resulting protection satisfies the required safety or security targets.

\end{abstract}

\noindent
\fbox{%
\parbox{\dimexpr\linewidth-2\fboxsep-2\fboxrule\relax}{%
\textbf{Fault Amplification through Structural Address-Space Decorrelation}
\vspace{2pt}
\hrule
\vspace{2pt}
\textbf{Fundamental principle.} Under fault-free execution, all replicas produce
identical canonical traces; under any fault, structural decorrelation guarantees
maximally divergent outcomes.
}}%

\section{Motivation: Failure of Conventional Redundancy under Correlated Faults}

Conventional redundancy schemes such as lockstep execution and TMR
primarily assume that faults affecting different replicas are independent.

However, experimental studies on electromagnetic fault injection have shown
that strongly correlated instruction-level perturbations are physically
realizable in practice [  ~\ref{Dutertre2021} ]. In particular, Dutertre et al.
demonstrated that a single electromagnetic pulse can skip multiple
consecutive instruction fetches on an 8-bit microcontroller.

In conventional lockstep or TMR systems executing identical binaries with
identical memory layouts, such correlated perturbations may redirect all
replicas along the same incorrect execution path, causing silent data
corruption.

Divergent Multi-Version Execution (DME) is designed to detect not only such
correlated control-flow faults, but more generally perturbations capable of
altering the intended execution semantics of a program, including pointer
corruption, memory-address faults, and misdirected control transfers.

Unlike conventional MVEE systems, DME interprets cross-replica address equality as a violation of structural independence rather than as a permissible execution state. This invariant enables detection of a broad class of faults that manifest identically across all replicas — including data pointer corruptions (e.g., null or value-as-pointer errors), application software bugs (e.g., uninitialized pointers or return address overwrites), and compiler/linking defects that produce correlated layout or addressing anomalies — all of which are detected immediately upon address collapse, without relying on semantic trace divergence.

\section{\DME \  Model}

\subsection{Architectural Summary: Core Principles}

The DME architecture rests on four fundamental principles enforced at compile time and optionally customized by the application developer.

\begin{enumerate}
\item \textbf{Independent Compilation.} All \(N\) replicas are compiled independently from the same source program. This produces distinct code and data layouts while preserving identical opcode-level semantics and isomorphic control-flow graphs. No runtime coordination is required for layout generation.

\item \textbf{Coarse-Grained Decorrelation: Function and Block Interleaving.} Functions and basic blocks are placed at different addresses across replicas using complementary branch displacement patterns. Specifically, inter-block and inter-function jump offsets are assigned opposite signs in different replicas (e.g., forward branches in replica 0 become backward branches in replica 1). This ensures that identical control-flow perturbations map to semantically divergent execution paths.

\item \textbf{Fine-Grained Decorrelation: Deterministic NOP Insertion.} Periodic NOP instructions are inserted at replica-specific offsets:
\[
\mathrm{offset}_r = \frac{l}{N} \cdot r, \quad r = 0,\dots,N-1
\]
where \(l\) is the insertion period (stride) specified by the application developer. The density \(1/l\) controls the trade-off between detection granularity and overhead. The developer may enable NOP insertion either:
\begin{itemize}
\item \textbf{Globally} — across the entire code section, or
\item \textbf{Selectively} — only within critical code regions (e.g., control-flow intensive loops, security-sensitive paths).
\end{itemize}
This flexibility allows tuning of fault detection resolution versus performance cost.

\item \textbf{Data Address Decorrelation.} Global variable addresses are shuffled independently for each replica. Stack frames and heap allocations are placed at different base offsets across replicas. Combined with canonical trace comparison (which excludes absolute addresses), this ensures that identical memory-addressing faults produce different loaded/stored values across replicas, exposing data-path faults.
\end{enumerate}

Together, these principles transform correlated low-level perturbations into observable semantic divergence while maintaining identical program semantics across replicas.

\begin{figure}[htbp]
    \centering
    \includegraphics[width=\columnwidth]{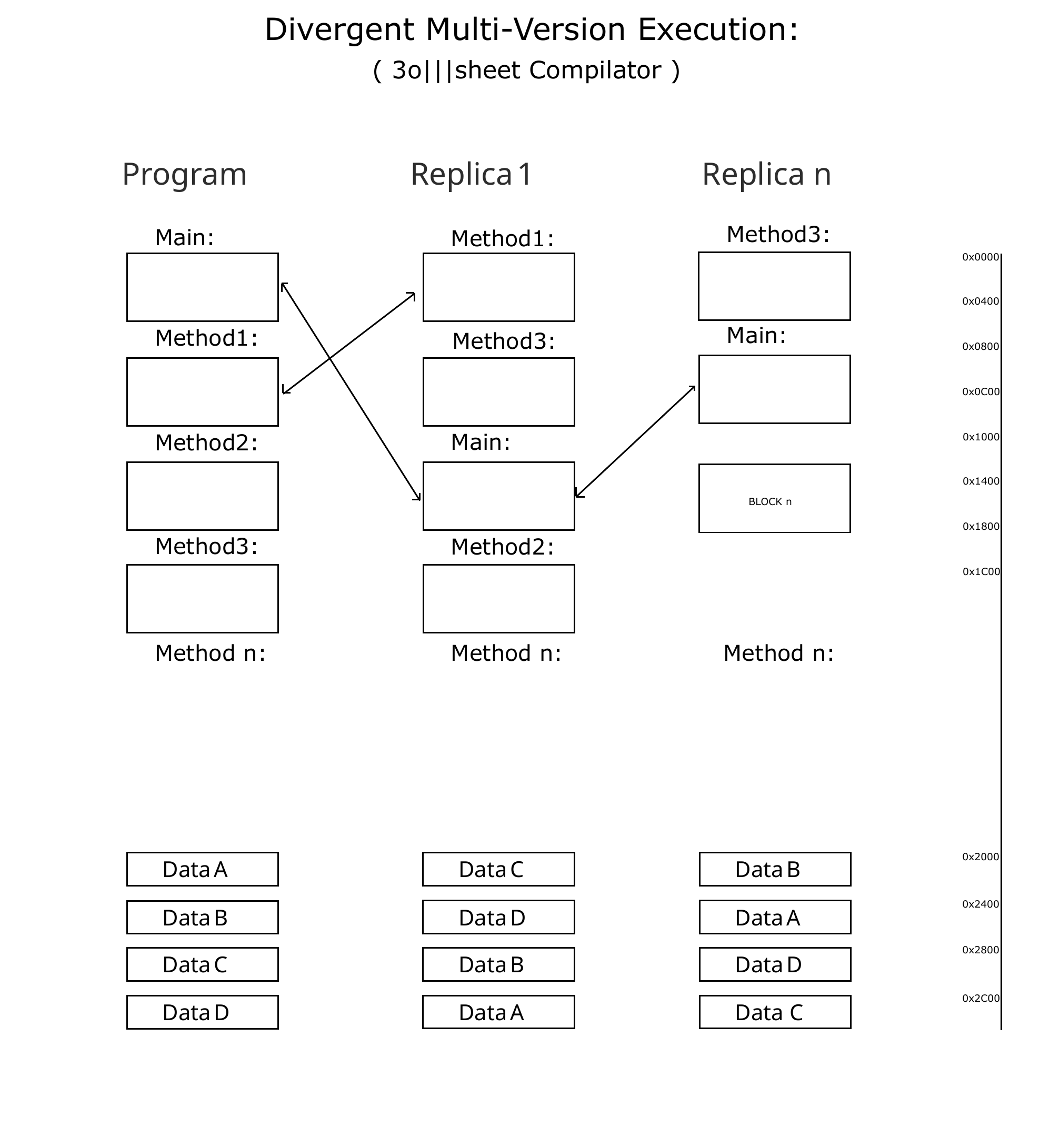}
    \caption{Architecture.}
    \label{fig:collapse}
\end{figure}

\subsection{Replicas}
\(N \geq 2\) replicas, same source program.
\begin{itemize}
\item Each compiled independently \(\rightarrow\) distinct physical layouts
\item Private code, stack, heap per replica
\item Identical opcode-level semantics and control-flow graph (isomorphic)
\end{itemize}


\begin{figure*}
    \centering
    \includegraphics[width=0.7\textwidth]{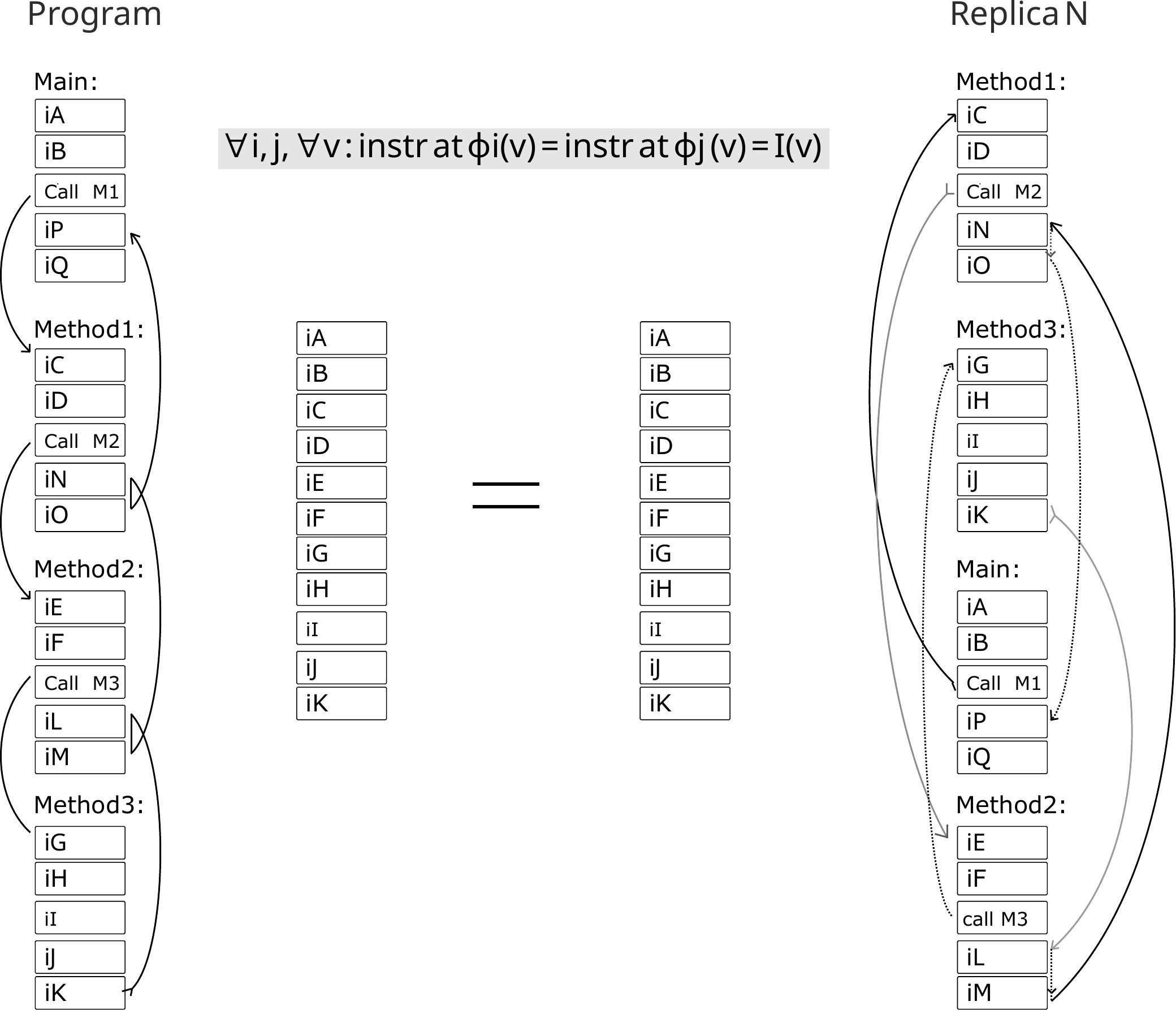}
    \caption{ Normal (fault-free) execution. Each replica's methods, objects, and variables are relocated in memory, with recompilation applying new addresses for instruction use during runtime.}
    \label{fig:scheme}
\end{figure*}



\begin{figure}[htbp]
    \centering
    \includegraphics[width=\columnwidth]{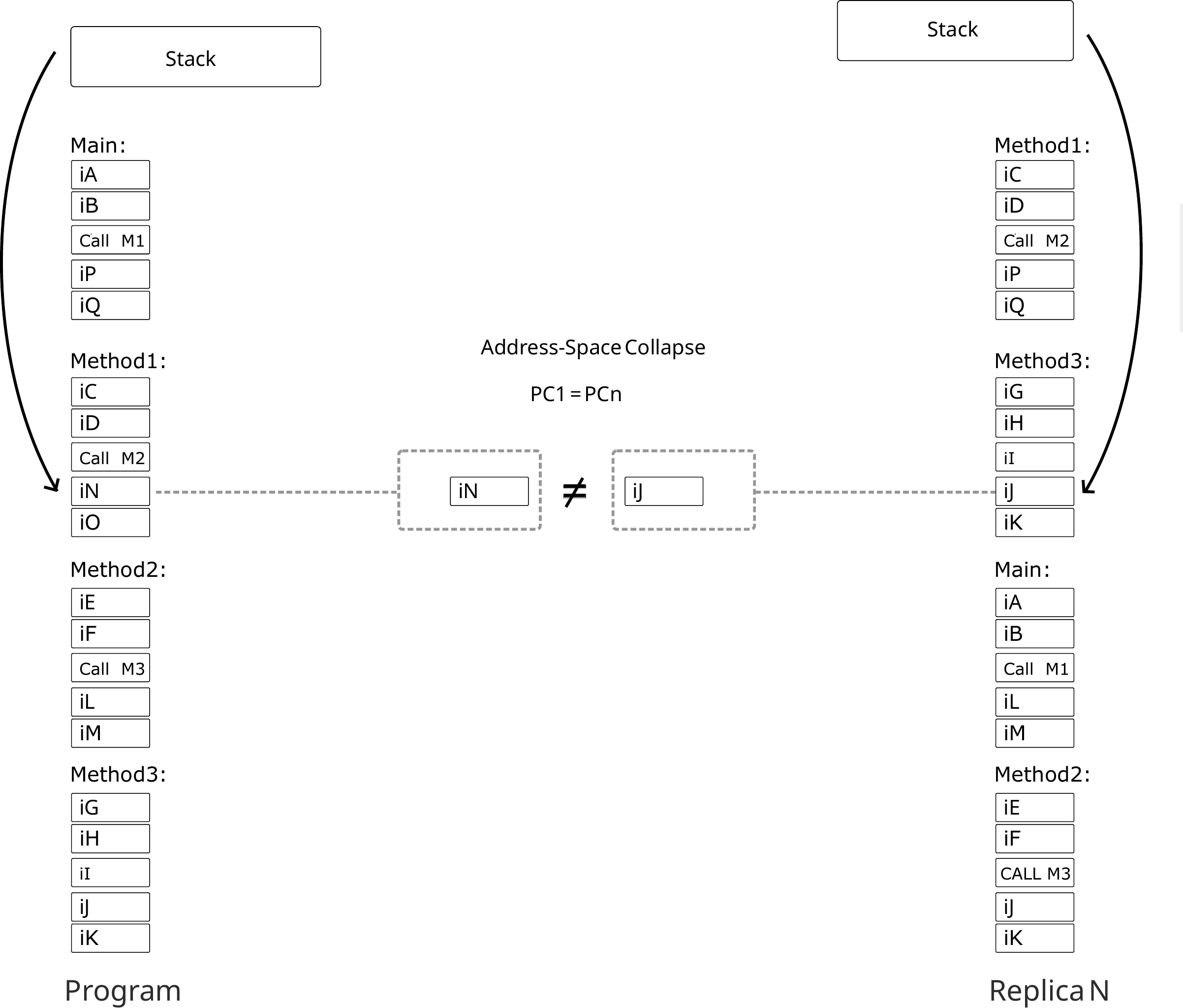}
    \caption{During normal execution, each replica's method has its own return address.
             Stack overflow can direct all replicas to the same address, causing trace divergence:
             $PC_1 = PC_2 = \dots = PC_N$.
             Simultaneous corruption of the return address in all replicas (via buffer overflow) $\rightarrow$ all PCs equal $\rightarrow$ immediate error.
             In N-Variant Systems, this could go undetected if the address falls into the allowed region.}
    \label{fig:collapse}
\end{figure}


\begin{figure}[htbp]
    \centering
    \includegraphics[width=\columnwidth]{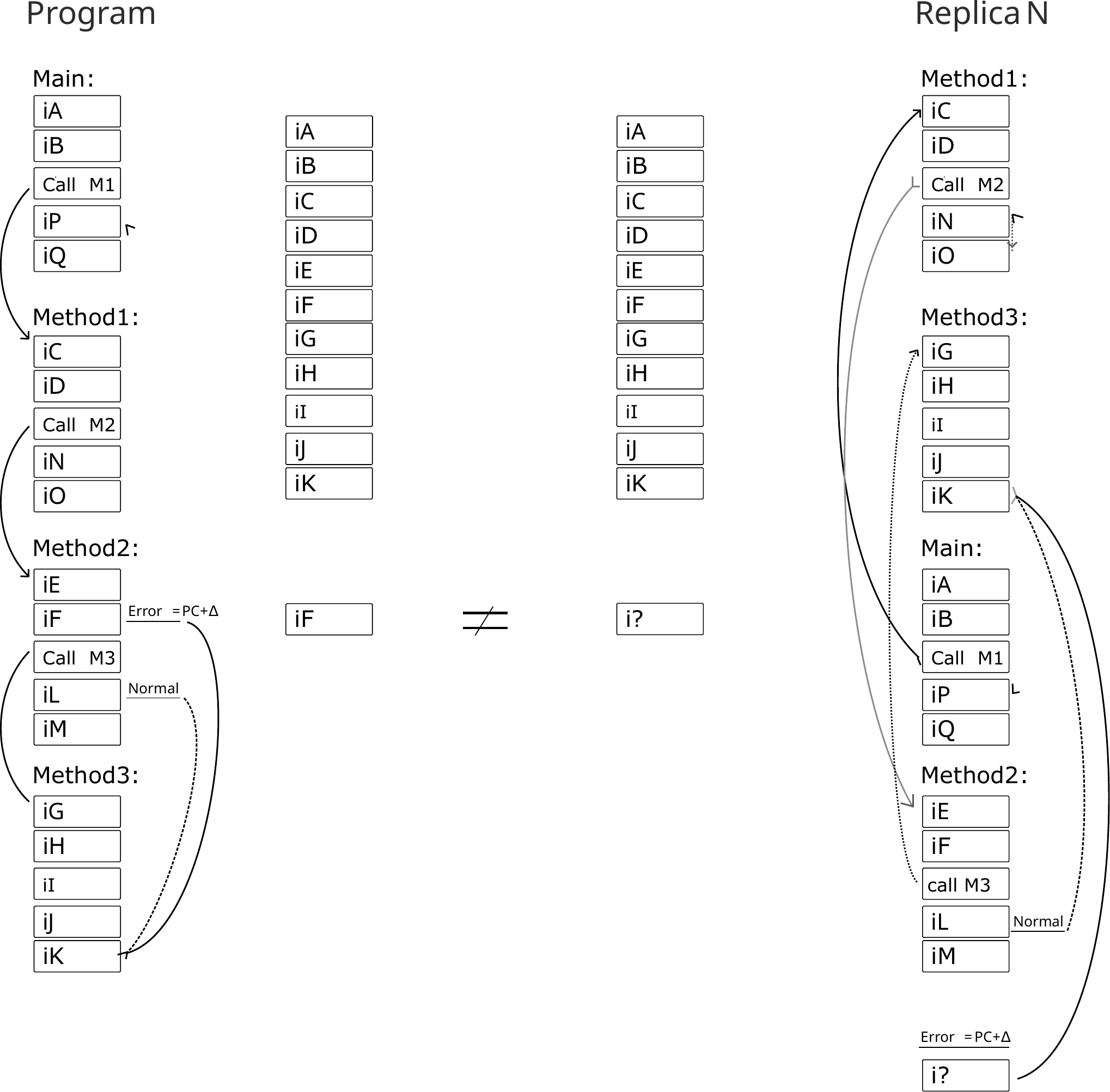}
    \caption{Address corruption in replicas may be partial (e.g., single-bit or multi-bit flips), resulting not in a uniform replacement address but in an offset from the normal value while remaining distinct across replicas. To detect such errors, DME employs opposite branch signs and NOP insertion to shift entry points.}
    \label{fig:faulty}
\end{figure}

\section{Address-Space Decorrelation}

Since DME targets bare-metal embedded systems without virtual memory (no MMU), all addressing is performed using absolute physical addresses. Structural address-space decorrelation is achieved by compiling and linking each replica independently with completely different absolute memory layouts.

For each replica \( r \), the compiler and linker assign unique absolute base addresses to code sections, data sections, stack, and heap regions. As a result, the same source-level function or variable is placed at significantly different absolute memory addresses across replicas.

DME employs three complementary mechanisms to achieve strong address-space decorrelation:

\begin{enumerate}
\item \textbf{Block-level diversification} --- functions and basic blocks are placed at different absolute addresses in each replica. Complementary branch displacement patterns are used (e.g., forward branches in one replica become backward branches in another).

\item \textbf{Instruction-level asymmetry} --- periodic \NOP\ insertion with replica-specific offsets:
\[
\mathrm{offset}_r = \frac{l}{N} \cdot r, \quad r = 0, \ldots, N-1
\]
where \( l \) is the insertion period (stride).

\item \textbf{Data layout diversification} --- global variables, static data, stack frames, and heap allocations are placed at different absolute addresses in each replica. Stack pointers are initialized to replica-specific base addresses.
\end{enumerate}

\subsection{Forced Fragmentation of Oversized Functions / Blocks}

When a function or basic block exceeds a critical size \( L_{\text{crit}} \) (e.g., 256 bytes), deterministic detection granularity degrades and coarse-grained decorrelation weakens due to locally similar branch displacements within the block.

To mitigate this, DME applies \textbf{forced fragmentation} during independent compilation of each replica. For any block exceeding \( L_{\text{crit}} \), the compiler selects fragmentation points. At each point:

\begin{itemize}
    \item The original instruction sequence is split into fragments.
    \item Each fragment is placed at a replica-unique absolute memory location.
    \item The original site receives an unconditional \texttt{JMP} to the fragment.
    \item The fragment ends with a \texttt{JMP} back to the continuation point.
\end{itemize}

\paragraph{Example (\(N=2\)):}
Consider a logical block consisting of instructions \( A, B, C, D, E, F \).

\noindent\textit{Replica 0} (fragments after \( B \) and \( D \)):
\[
\text{Main: } A, B, \texttt{JMP f1} \quad
\text{f1: } C, D, \texttt{JMP f2} \quad
\text{f2: } E, F, \texttt{JMP ret}
\]

\noindent\textit{Replica 1} (fragments after \( C \) and \( E \)):
\[
\text{Main: } A, B, C, \texttt{JMP f1} \quad
\text{f1: } D, E, \texttt{JMP f2} \quad
\text{f2: } F, \texttt{JMP ret}
\]

A correlated fault \( \Delta PC = +2 \) causes the replicas to execute different instructions (\( B \to D \) in Replica 0 vs. \( B \to E \) in Replica 1), leading to immediate divergence of the canonical instruction trace.

\paragraph{Overhead:} 
Each fragmentation point increases code size by 2--4 bytes (JMP pair). To limit overhead, fragmentation can be applied selectively only to control-flow intensive or security-critical regions.


\begin{figure}
    \centering
\includegraphics[width=0.5\columnwidth]{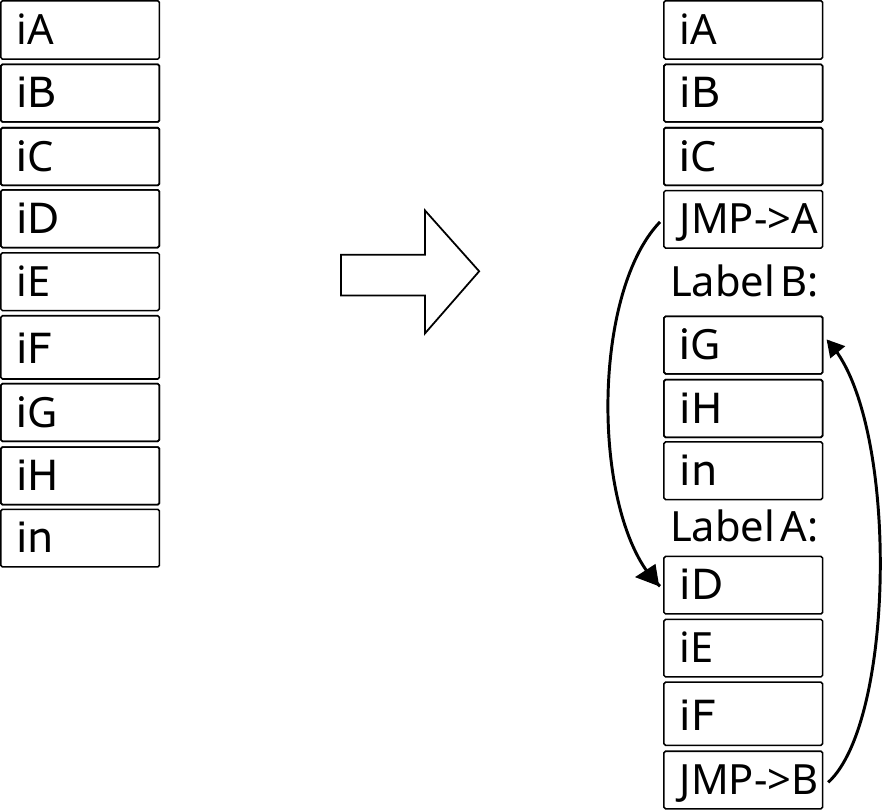}
    \caption{Forced fragmentation of oversized blocks. This technique can optionally complement or replace NOP padding for fine-grained decorrelation.}
    \label{fig:fragmentation}
\end{figure}


\begin{paragraph}{Relationship to DME mechanisms:}
\begin{center}
\small
\begin{tabularx}{\columnwidth}{|X|l|X|}
\hline
Mechanism & Scope & Detection type \\ \hline
NOP insertion & fine-grained & deterministic ($\ge l/N$) \\ \hline
Block interleaving & coarse-grained & probabilistic \\ \hline
\textbf{Fragmentation} & oversized blocks & deterministic / probabilistic \\ \hline
\end{tabularx}
\end{center}
\end{paragraph}

Forced fragmentation ensures that no single contiguous block remains large enough to hide a correlated fault. The deterministic detection bound (Theorem 1) applies independently to each fragment.


\section{Canonical Instruction Trace (Full Definition)}

The canonical instruction representation \(C(I,s)\) for an instruction \(I\) 
and resulting architectural state \(s\) includes:

\begin{itemize}
    \item opcode and condition codes,
    \item source and destination register identifiers,
    \item immediate operands,
    \item loaded memory values (for load instructions),
    \item computed results (for ALU operations),
    \item stored values (for store instructions).
\end{itemize}

Absolute or relative addresses are explicitly excluded, as are NOP instructions.

\textbf{Example.} Two replicas with different NOP layouts:

\begin{verbatim}
Replica 1: A, B, C, NOP, D, E, F, G
Replica 2: NOP, A, B, C, D, E, NOP, F
\end{verbatim}

Under fault-free execution, both produce the same canonical trace:
\( \langle A, B, C, D, E, F, G \rangle \).

If a fault causes \(\Delta PC = +3\) during instruction \(B\):
\begin{itemize}
    \item Replica 1: \(B \to D\) (skips NOP)
    \item Replica 2: \(B \to E\)
\end{itemize}
Canonical traces diverge → fault detected immediately.

\paragraph{Key observation.}
The inclusion of \emph{computed results} in the canonical trace creates a
temporal coupling between consecutive instructions. For an ALU operation,
the result becomes an input to subsequent instructions. Consequently,
if two replicas produce identical canonical traces at time $t$, their
register states must be identical at time $t$ (at least for the
destination registers). By induction, identical traces over a sequence
imply identical full register states throughout that sequence.

Thus, undetected execution requires not only semantically equivalent
instructions but also \emph{identical prior computation history}.
This dramatically reduces the probability of prolonged undetected faults
compared to schemes that compare only opcodes or control-flow signatures.

\subsection{Fault Model}
Control-flow fault:
\[
PC_f^{(r)} = PC_i^{(r)} + \Delta^{(r)}
\]
We consider:
\begin{itemize}
\item Independent faults (single replica affected)
\item Partially correlated faults \((\Delta^{(i)} \neq \Delta^{(j)})\)
\item Fully correlated faults \((\Delta^{(1)} = \Delta^{(2)} = \dots = \Delta)\) — the dangerous case for conventional redundancy.
\end{itemize}


\subsection{Unified Perturbation Model}

Conventional fault-tolerance models distinguish between
hardware faults, software defects, compiler bugs, and
environmental disturbances as separate categories.

DME does not fundamentally depend on the origin of a
perturbation. Instead, any mechanism capable of altering
execution semantics is modeled uniformly as a semantic
perturbation operator:

\[
\Pi : S_t \rightarrow S'_t
\]

where:

\begin{itemize}
    \item \(S_t\) is the expected architectural state at time \(t\),
    \item \(S'_t\) is the perturbed architectural state.
\end{itemize}

The origin of \(\Pi\) is irrelevant to the DME model.

Examples include:

\begin{itemize}
    \item transient hardware faults,
    \item permanent hardware defects,
    \item compiler miscompilation,
    \item linker corruption,
    \item memory corruption,
    \item undefined behavior manifestations,
    \item malicious fault injection,
    \item control-flow hijacking,
    \item voltage or clock glitches,
    \item electromagnetic interference,
    \item radiation-induced bit flips.
\end{itemize}

DME verifies only whether diversified executions remain
semantically equivalent.

Let \(T_r\) denote the canonical execution trace of
replica \(r\). Runtime correctness is defined as:

\[
T_1 = T_2 = \cdots = T_N
\]

Violation of semantic equivalence:

\[
\exists i \neq j : T_i \neq T_j
\]

indicates incorrect execution independently of the
perturbation origin.

Thus, DME should not be viewed solely as a fault
detection mechanism, but more generally as a runtime
semantic consistency verifier for diversified executions.


\subsection{Assumptions and Threat Model}

DME assumes that replicas preserve semantic equivalence
under fault-free execution while maintaining structural
independence of address mappings.

The model assumes:

\begin{itemize}
    \item independent replica compilation,
    \item non-identical code and data layouts,
    \item deterministic canonicalization,
    \item correct synchronization of execution slices,
    \item trusted trace comparison logic.
\end{itemize}

DME targets perturbations capable of modifying
execution semantics, including:

\begin{itemize}
    \item transient hardware faults,
    \item memory corruption,
    \item compiler or linker defects,
    \item control-flow corruption,
    \item pointer corruption,
    \item fault injection attacks.
\end{itemize}

The model does not guarantee detection of perturbations
that preserve semantic equivalence across all replicas.

Examples include:

\begin{itemize}
    \item identical semantic-preserving compiler bugs,
    \item faults affecting shared architectural state
          identically,
    \item perturbations confined entirely to excluded
          shared regions.
\end{itemize}


\section{Detectability Guarantees}

DME provides two complementary regimes of semantic
divergence observability:

\begin{enumerate}
    \item \textbf{Deterministic detection}, ensured by fine-grained address-space asymmetry via replica-specific NOP insertion.
    \item \textbf{Probabilistic detection}, activated when execution exits the deterministic region or when perturbations are not fully correlated, and relies on structural address-space decorrelation.
\end{enumerate}

\subsection{Deterministic Detection (Fine-Grained)}

\begin{theorem}[Deterministic Detection Bound]
Let \(N\) replicas employ periodic NOP padding with stride \(l\) and replica-specific offsets
\[
\mathrm{offset}_r = \frac{l}{N} \cdot r, \quad r = 0,\dots,N-1.
\]
Assume a fully correlated perturbation:
\[
\Delta PC^{(1)} = \Delta PC^{(2)} = \cdots = \Delta PC.
\]
If
\[
|\Delta PC| \geq \frac{l}{N},
\]
then at least one replica executes a different \emph{logical} instruction, causing canonical trace divergence within one execution slice.
\end{theorem}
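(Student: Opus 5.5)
The approach is to formalize each replica's code layout as a map from physical addresses to logical instructions and show that a common shift \(\Delta PC\) cannot land every replica on the same logical instruction. Consider a region of \(M\) logical instructions \(I_0,\dots,I_{M-1}\). In replica \(r\), a \NOP\ occupies every position congruent to \(\mathrm{offset}_r = rl/N\) modulo \(l\). The map \(\phi_r(p)\), the logical index at physical position \(p\) (or \NOP), is then an explicit step function. Its displacement satisfies
\[
\phi_r(p+\Delta)-\phi_r(p) = \Delta - \#\{\NOP\text{s of replica } r \text{ in } (p,p+\Delta]\}.
\]
Suppose the perturbation occurs while all replicas are at the same logical instruction \(I_k\), at physical positions \(p_r\). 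Undetected execution would require every replica to land on the same logical instruction:
\[
\phi_r(p_r+\Delta PC)=k'\quad\text{for all } r.
\]
Landing on a \NOP\ simply advances to the next logical instruction, since \NOP s are excluded from the trace, so I would treat that case by rounding up to the next logical index. The proof then reduces to showing that the skipped-\NOP\ counts \(n_r\) over the window of length \(|\Delta PC|\) cannot all be equal.

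The key step is a counting argument over the staggered phases. The \NOP\ phases \(rl/N\) partition each period of length \(l\) into \(N\) arcs of length \(l/N\). I will also use the fact that the relative positions \(p_r\) differ by exactly the accumulated phase offsets, because the replicas execute the same logical prefix.

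First, take \(l/N \le |\Delta PC| < l\). A window of length at least \(l/N\) contains at least one phase point for some replica, since any interval of that length meets the lattice \(\{rl/N\}+l\mathbb{Z}\) in at least one point. It cannot contain a phase point of every replica, because it is shorter than a full period. So some replica skips a \NOP\ that another does not, the counts \(n_r\) differ, and the logical targets differ. For \(|\Delta PC| \ge l\), I would write \(\Delta PC = ql + s\). Every replica skips exactly \(q\) \NOP s from the full periods, so the remainder \(s\) decides the outcome, and the previous case applies when \(s \ge l/N\).

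Divergence of logical instruction then gives divergence of the canonical trace at the next compared instruction. Two instructions with different indices in a non-degenerate slice differ in opcode, registers, immediates or computed values (the key observation of the canonical-trace section). This happens within one execution slice.

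The main obstacle is the residue issue: \(s < l/N\) with \(q \ge 1\), and more generally aligning the physical windows across replicas, whose base positions \(p_r\) differ. A naive statement fails when \(\Delta PC\) is an exact multiple of \(l\) in a uniformly padded region. I would handle this by working within a single fine-grained \NOP-decorrelated region of length at most \(l\), as the abstract's phrase ``within the fine-grained NOP-decorrelated region'' suggests. Alternatively, I would add the hypothesis that the region's logical instructions are non-repeating, so that equal skip counts still land on distinct, distinguishable instructions when block boundaries (fragmentation, Theorem-level per-fragment application) shift the lattices. I would state this restriction explicitly rather than hide it.
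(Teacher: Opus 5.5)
There is a genuine gap, and it is in your key counting step for $l/N \le |\Delta PC| < l$. The claim that a window shorter than one period ``cannot contain a phase point of every replica'' is false. The union of the $N$ phase lattices has spacing $l/N$, so a window longer than $(N-1)l/N$ can meet every replica's lattice once; for $N=2$, $l=4$, the window $(1,4]$ contains the phases $2$ and $4$. Being shorter than $l$ only caps each skip count $n_r$ at one. It does not prevent all $n_r$ from equalling one, and equal counts give equal logical targets.

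Two further bookkeeping points make your common-window picture unreliable. First, a NOP landing rounds forward, so $n_r$ counts NOPs in the open window $(p_r,p_r+\Delta PC)$. Second, each window starts at that replica's own copy of $I_k$, so the phase that governs $n_r$ is $(k-rl/N)\bmod(l-1)$ in logical-index space, not $rl/N \bmod l$.

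A concrete failure: take $N=2$, $l=4$ slots, offsets $0$ and $2$. Replica $0$ holds $I_1, I_2, \mathrm{NOP}, I_3$ in slots $2$--$5$, and replica $1$ holds $I_1, \mathrm{NOP}, I_2, I_3$ in slots $1$--$4$. A correlated $\Delta PC=+3$ during $I_1$ sends replica $0$ to slot $5$ and replica $1$ to slot $4$. Both resume at $I_3$, $I_2$ is skipped in both, and the canonical traces stay identical. Likewise $\Delta PC=+2=l/N$ during $I_0$ brings both replicas to $I_2$, one of them via a NOP.

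So the problem lies in the statement itself, not just in your write-up, and neither of your proposed restrictions closes it. The example lies inside one period, uses no repeated instructions, and lands both replicas on the \emph{same} instruction, so non-repetition is irrelevant.

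The paper's proof does not supply a missing idea either. Its premise of a ``minimum spatial separation of $l/N$ between corresponding logical instructions'' confuses the spacing of the NOP phases with the offset between corresponding instructions, which is at most one slot. Carrying your count through correctly for a forward jump $\Delta PC = ql+s$ slots, $0\le s<l$, from $I_k$: replica $r$ skips $q+1$ NOPs if $a_r \ge l-s+1$ and $q$ otherwise, where $a_r=((k-rl/N)\bmod(l-1))+1$. It follows that divergence is forced for every starting instruction exactly when $l/N < s \le l-l/N$. That range is empty for $N=2$. A provable version of Theorem~1 therefore needs a hypothesis of this kind, with $N\ge 3$, in place of $|\Delta PC|\ge l/N$.
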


\begin{proof}
Replica-specific offsets induce a minimum spatial separation of \(l/N\) between corresponding logical instructions across replicas. A perturbation of magnitude \(|\Delta PC| \ge l/N\) exceeds this separation. Therefore, identical numerical perturbations cannot map all replicas to the same logical instruction. Hence, there exist replicas \(i \ne j\) executing different logical instructions, which leads to divergence of canonical traces within one execution slice.

\end{proof}

\noindent
\textit{Remark.} Experimental work on electromagnetic fault injection confirms that fully correlated
perturbations capable of skipping multiple consecutive instructions are realizable in practice~\ref{Dutertre2021}.
Thus, the adversarial model assumed in Theorem~1 is not merely theoretical.


\begin{figure}[htbp]
    \centering
    \includegraphics[width=\columnwidth]{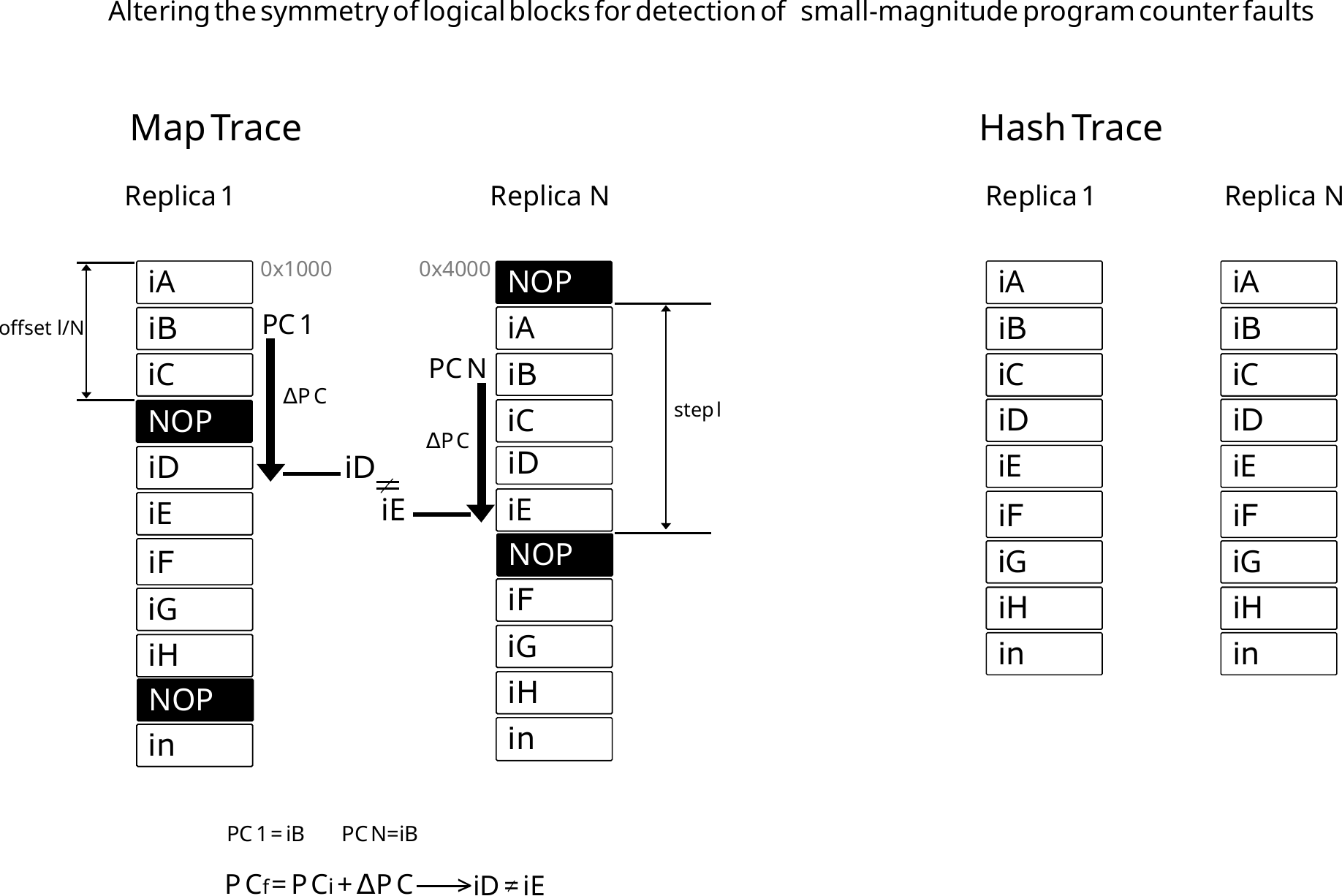}
    \caption{Deterministic NOP decorrelation: replica‑specific offsets 
         create minimum separation \(l/N\), ensuring detection of 
         correlated PC faults with \(|\Delta PC| \ge l/N\).}
    \label{fig:collapse}
\end{figure}


\paragraph{Interpretation.}
Any perturbation satisfying \(|\Delta PC| \ge l/N\) is guaranteed to be detected immediately. Perturbations with \(|\Delta PC| < l/N\) fall within an alignment window in which replicas may remain synchronized.

\paragraph{Example.}
Let \(l = 8\) bytes and \(N = 2\), so \(l/N = 4\) bytes.
\begin{itemize}
    \item \(|\Delta PC| < 4\): alignment window (no guarantee),
    \item \(|\Delta PC| \ge 4\): deterministic detection,
\end{itemize}

\subsection{Probabilistic Detection (Coarse-Grained)}

When execution exits the NOP-padded region (e.g., due to a large perturbation or corrupted control transfer), or when perturbations are not fully correlated, detection relies on divergence in decorrelated address spaces.

\paragraph{Instruction-space density.}

Let:
\begin{itemize}
    \item \( |S| \): number of valid instruction entry points,
    \item \( C \): number of instructions canonically equivalent to the expected instruction.
\end{itemize}

Define:
\[
\gamma = \frac{C}{|S|},
\]
which bounds the probability of landing on a canonically equivalent instruction.

\paragraph{Canonical equivalence.}

Each instruction \(I\) with resulting state \(s\) is mapped to:
\[
C(I, s) = (opcode, src\_regs, dst\_reg, result),
\]
excluding layout-dependent information.

\paragraph{State consistency factor.}

Let
\[
\varepsilon = \sup P(\text{equal result}),
\]
be the probability that two instructions produce identical results.

For an ISA with:
\begin{itemize}
    \item \(b_{op}\) bits for opcode,
    \item \(b_{reg}\) bits for register identifiers,
    \item \(b_{val}\) bits for result values,
\end{itemize}
we obtain:
\[
\varepsilon \le 2^{-(b_{op} + b_{reg} + b_{val})}.
\]

\paragraph{Example.}
For 8-bit opcode, three 4-bit register fields, and 32-bit result:
\[
\varepsilon \le 2^{-52}.
\]

\begin{theorem}[Probabilistic Detectability Bound]
Under structural independence, the probability that a fault remains undetected for \(k\) consecutive instructions is bounded by:
\[
P_{\mathrm{undetected}}(k) \le \left( \frac{C}{|S|} \cdot \varepsilon \right)^k.
\]
\end{theorem}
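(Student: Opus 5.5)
The bound follows from a product-of-conditional-probabilities argument. Define, for each step $t=1,\dots,k$ after the fault, the event $E_t$: the divergent (faulty) replica's $t$-th executed instruction yields a canonical representation $C(I,s)$ equal to that of the reference replica. Remaining undetected for $k$ consecutive instructions is exactly $E_1\cap\cdots\cap E_k$. By the chain rule,
\[
P_{\mathrm{undetected}}(k)=\prod_{t=1}^{k} P\bigl(E_t \mid E_1\cap\cdots\cap E_{t-1}\bigr).
\]
It therefore suffices to bound each conditional factor by $\gamma\varepsilon$ with $\gamma=C/|S|$.

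To bound a single factor, I split $E_t$ into two sub-events.
\begin{itemize}
\item $A_t$: the faulty replica lands on an entry point whose instruction is canonically equivalent, i.e.\ has the same opcode and register fields.
\item $B_t$: given $A_t$, the computed or loaded result also coincides.
\end{itemize}
Then $P(E_t\mid\cdot)=P(A_t\mid\cdot)\,P(B_t\mid A_t,\cdot)$. For $A_t$ I invoke structural independence in the following form. Because each replica was compiled and laid out independently (Address-Space Decorrelation section), the location reached in the faulty replica is, conditional on the past, distributed at worst uniformly over the $|S|$ valid entry points. This gives $P(A_t\mid\cdot)\le C/|S|$. For $B_t$ I use the state consistency factor $\varepsilon=\sup P(\text{equal result})$ as defined, taking the supremum over all conditioning histories so that it upper bounds the conditional probability. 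Multiplying gives the factor bound $\gamma\varepsilon$, and the product yields $(\gamma\varepsilon)^k$.

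The main obstacle is justifying the per-step independence, namely that conditioning on $E_1,\dots,E_{t-1}$ does not inflate $P(A_t)$ or $P(B_t)$. The Key observation in the canonical trace section works against us here: past agreement forces identical register states, so result coincidence is positively correlated with past agreement. I would handle this by making ``structural independence'' precise as the hypothesis that, at each step, the landing point in the faulty layout is independent of the reference layout given the history. I would also define $\varepsilon$ as a supremum of conditional probabilities rather than marginal ones, so the bound holds by fiat of the definition.

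One might hope to avoid this by dropping the dependence on $\varepsilon$ at later steps. I would not do so: with $\varepsilon$ taken as the conditional supremum and $\gamma$ as the conditional landing bound, the factorization closes and the stated inequality follows. I would also note explicitly that deterministic successors (sequential fall-through after a match) are covered only because layouts differ, so the next fetched address is again re-randomized relative to the reference. This is the point where the independent-compilation assumption is genuinely used.
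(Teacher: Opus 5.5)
Your argument is essentially the paper's: the paper bounds each step by the probability of fetching a canonically equivalent instruction ($\le C/|S|$) times the probability of an identical result ($\le \varepsilon$), then raises this to the $k$-th power. Your chain-rule factorization into $A_t$ and $B_t$ is the same decomposition, with the conditional independence the paper assumes tacitly now stated as an explicit hypothesis. One caution: your closing claim that independent compilation ``re-randomizes'' fall-through successors is not literally true, since after landing the faulty replica follows program order, so the factors for $t \ge 2$ rest on your assumed hypothesis, as in the paper, rather than being derived from layout independence.
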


\begin{proof}
At each step, undetected execution requires:
\begin{enumerate}
    \item fetching a canonically equivalent instruction (probability \(\le C/|S|\)),
    \item producing an identical result (probability \(\le \varepsilon\)).
\end{enumerate}

Thus:
\[
P_{\text{step}} \le \frac{C}{|S|} \cdot \varepsilon.
\]

Over \(k\) steps:
\[
P_{\mathrm{undetected}}(k) \le \left( \frac{C}{|S|} \cdot \varepsilon \right)^k.
\]
\end{proof}

\paragraph{Discussion.}

The bound is conservative:
\begin{itemize}
    \item assumes uniform instruction distribution,
    \item ignores control-flow constraints,
    \item ignores state divergence amplification.
\end{itemize}

In practice, detection is significantly stronger.

\paragraph{Example.}

Let:
\begin{itemize}
    \item \( |S| = 2048 \),
    \item \( C = 12 \),
    \item \( \varepsilon \le 2^{-52} \).
\end{itemize}

For \(k = 2\):
\[
P \le \left( \frac{12}{2048} \cdot 2^{-52} \right)^2,
\]
which is negligible.

\subsection{Probabilistic Detection for Partially Correlated Faults}

The deterministic guarantee (Theorem 1) applies only to \emph{fully correlated} faults with \textbf{identical} perturbations across \emph{all} replicas, and only when \(|\Delta PC| \geq l/N\).

Two important practical cases fall outside this guarantee:

\begin{enumerate}
    \item \textbf{Differing perturbations}: A fault affects all replicas, but \(\Delta PC^{(i)} \neq \Delta PC^{(j)}\) for some \(i \neq j\).
    \item \textbf{Single-replica fault}: Only one replica experiences a perturbation; others execute correctly.
\end{enumerate}

In both cases, deterministic detection may fail due to possible \emph{re-alignment}, but detection is still possible with high probability.

\subsubsection{Detection Condition}

For a fault to remain undetected for \(k\) consecutive instructions, three conditions must hold simultaneously:

\begin{enumerate}
    \item \textbf{Equal logical instruction}: All replicas fetch instructions that are canonically equivalent:
    \[
    \Pr \leq \frac{C}{|S|}
    \]
    \item \textbf{Equal computed result}: The instructions produce identical result values:
    \[
    \Pr \leq \epsilon
    \]
    \item \textbf{No structural divergence}: Address-space collapse does not occur (i.e., PCs may differ, but the above two conditions still hold).
\end{enumerate}

\subsubsection{Probability Bound}

Under structural independence, the probability of undetected execution over \(k\) steps is bounded by the same exponential expression as in Theorem 2:

\begin{equation}
P_{\text{undetected}}(k) \leq \left( \frac{C}{|S|} \cdot \epsilon \right)^k
\end{equation}

\noindent
\textbf{Justification.}  
Even when \(\Delta PC\) values differ, the worst case for detection is when replicas accidentally align to the same logical instruction and produce identical results. The per-step probability of this coincidence is exactly \(\gamma \cdot \epsilon = \frac{C}{|S|} \cdot \epsilon\), independent of whether the perturbations were identical or not.

\subsubsection{Comparison with Lockstep}

In classical lockstep, differing perturbations or a single-replica fault cause \textbf{immediate} detection (different PCs $\rightarrow$ error flagged).

In DME, the same scenario may \emph{temporarily} escape detection if the different perturbations coincidentally map to the same logical instruction. However:

\begin{itemize}
    \item The probability of such coincidence is extremely low for realistic ISAs (\(\gamma \cdot \epsilon \approx 2^{-52}\) in the example of Section 5.2).
    \item Even if alignment occurs at step 1, the probability of maintaining it for \(k \geq 2\) becomes negligible.
\end{itemize}

Thus, DME provides \textbf{probabilistic detection} with exponential decay, whereas lockstep provides \textbf{deterministic detection} for differing perturbations.

\paragraph{Detection comparison: lockstep vs. DME}

\textbf{(A) Fully correlated, identical $\Delta PC$, $|\Delta PC| \geq l/N$:}
Lockstep: not detected (silent data corruption);
DME: deterministic detection.

\textbf{(B) Fully correlated, identical $\Delta PC$, $|\Delta PC| < l/N$:}
Lockstep: not detected;
DME: not guaranteed.

\textbf{(C) Differing $\Delta PC$ or single-replica fault:}
Lockstep: deterministic detection;
DME: probabilistic detection with bound $(\gamma \cdot \epsilon)^k$.

\subsubsection{Example}

Let \(N = 2\), \(l = 8\) bytes, \(w = 4\) bytes.  
Consider a fault affecting only Replica 0: \(\Delta PC^{(0)} = +2\), \(\Delta PC^{(1)} = 0\).

\begin{itemize}
    \item \textbf{Lockstep:} \(PC_0 \neq PC_1\) $\rightarrow$ immediate detection.
    \item \textbf{DME:} Replica 0 may land on a NOP, Replica 1 on a real instruction. Since NOP has no result, canonical equivalence is impossible — detection occurs immediately. If both land on different real instructions that accidentally produce the same result (rare), detection is delayed but bounded by \((\gamma \cdot \epsilon)^k\).
\end{itemize}

Thus, even in the probabilistic regime, practical detection is effectively immediate for most fault patterns.

\subsection{Non-Correlated Perturbations}

If perturbations differ across replicas:
\[
\Delta PC^{(i)} \ne \Delta PC^{(j)},
\]
deterministic guarantees do not apply, even within padded regions.

\paragraph{Re-alignment effect.}

Execution may temporarily re-align. For example:
\begin{itemize}
    \item Replica 1 skips a real instruction,
    \item Replica 2 skips a NOP,
\end{itemize}
resulting in both reaching the same logical instruction.

\paragraph{Example.}

\begin{itemize}
    \item Replica 1: \(\Delta PC = +3\),
    \item Replica 2: \(\Delta PC = +4\).
\end{itemize}

Both replicas may fetch the same instruction, delaying detection.

\paragraph{Detection condition.}

Undetected execution requires:
\begin{enumerate}
    \item alignment to the same logical instruction,
    \item canonical equivalence,
    \item identical computed results.
\end{enumerate}

Thus, detection follows:
\[
P_{\mathrm{undetected}}(k) \le \left( \frac{C}{|S|} \cdot \varepsilon \right)^k.
\]

\subsection{Summary of Detection Regimes}

\begin{itemize}
    \item \textbf{Deterministic regime:}  
    For fully correlated perturbations with \(|\Delta PC| \ge l/N\), detection is guaranteed within one instruction.

    \item \textbf{Alignment window:}  
    For \(|\Delta PC| < l/N\), replicas may remain aligned; no guarantee.

    \item \textbf{Non-correlated perturbations:}  
    Detection becomes probabilistic due to possible re-alignment.

    \item \textbf{Probabilistic regime:}  
    Outside padded regions or under misalignment, detection probability decays exponentially:
    \[
    \left( \frac{C}{|S|} \cdot \varepsilon \right)^k.
    \]
\end{itemize}

\paragraph{Alignment window behavior.}
If \(|\Delta PC| < l/N\) and perturbations are fully correlated, detection is not guaranteed.

However, if \(\Delta PC^{(i)} \neq \Delta PC^{(j)}\) for any \(i \neq j\), semantic divergence occurs immediately, guaranteeing detection within one slice.

Thus, the only undetectable case inside the window is identical small perturbations — a scenario equally undetectable in classical lockstep.

\section{Dual-Layer Detection}

DME employs two orthogonal detection layers:

\textbf{Semantic layer.} Compares canonical traces \(T_i\) across replicas. Detection condition: \(T_i \neq T_j\) for any \(i \neq j\) indicates a fault.

\textbf{Structural layer.} Compares physical addresses (program counter, memory addresses) across replicas. Detection condition: \(PC_1 = PC_2 = \cdots = PC_N\) indicates address-space collapse — a violation of structural independence — and triggers a system error.

Address-space collapse violates the assumption \(\phi_i \neq \phi_j\) and removes deterministic guarantees. It is detected and reported.

\subsection{Address-Space Collapse}

\textbf{Definition.} Under correct structural diversification, address mappings 
\(\phi_r\) are pairwise distinct:

\[
\forall r \neq q:\; \phi_r \neq \phi_q.
\]

\textbf{Address-space collapse} occurs if, during execution, two or more replicas 
resolve semantically corresponding operations to the same physical address:

\[
\exists i \neq j:\; PC_i = PC_j \quad \text{or} \quad \exists i \neq j:\; addr_i = addr_j.
\]

\textbf{Detection.} The structural address comparator monitors program counters 
and memory addresses across replicas. Equality of physical addresses indicates 
a violation of structural independence — a condition that cannot occur under 
correct compilation.

\textbf{Consequence.} When address-space collapse happens:
\begin{itemize}
    \item Identical numerical perturbations \(\Delta\) applied to all replicas 
          produce identical instruction fetches and memory accesses.
    \item Canonical traces remain equal: \(T_1 = T_2 = \dots = T_N\).
    \item Deterministic detectability guarantees are lost.
\end{itemize}

\textbf{System response.} Address-space collapse is treated as a system-level 
error, signalling either:
\begin{itemize}
    \item compiler-induced layout aliasing, or
    \item a fault that forced address convergence despite structural independence.
\end{itemize}

Thus, DME provides dual-layer detection: semantic divergence (canonical traces) 
and structural violation (address equality).


\begin{figure*}
    \centering
    \includegraphics[width=\textwidth]{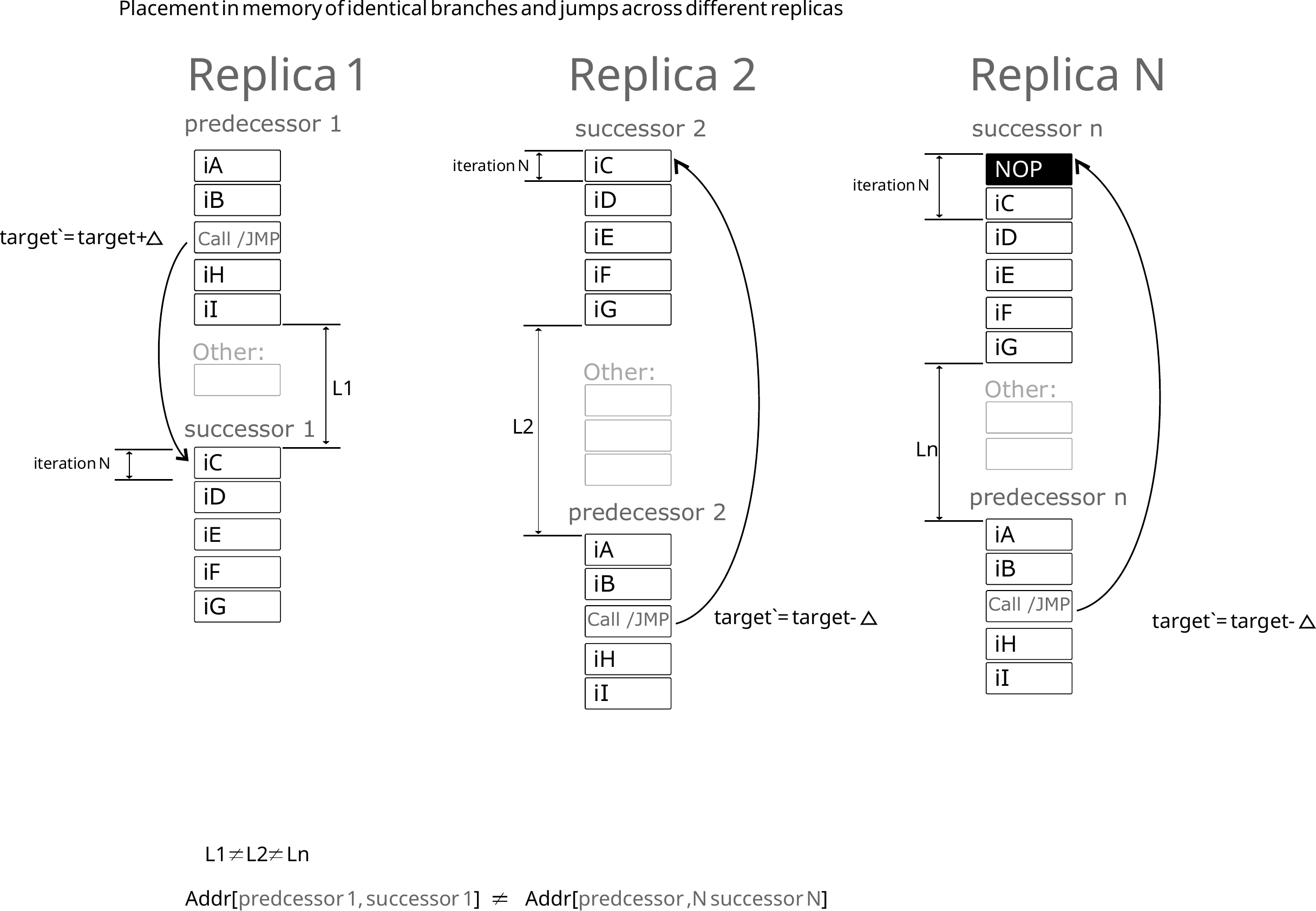}
    \caption{Example of compiler placement of three identical logical blocks and functions in memory. For instance, if a branch in the first replica has a positive offset (PC+ = 64), the same branch in a second replica may have a negative offset (PC- = 96). The delta between branch targets also differs across replicas, ensuring structural address-space diversity.}
    \label{fig:scheme}
\end{figure*}


\subsection{Address Non-Aliasing Detection (Early Memory Fault Detection)}

In addition to semantic trace comparison and structural address-space independence,
we introduce a runtime invariant that enables early detection of memory corruption
errors by comparing \emph{effective} memory access addresses across replicas. 
This includes both data pointers and instruction pointers (program counters).

\paragraph{Address Non-Aliasing Invariant.}
Under correct compilation and layout diversification, corresponding memory accesses
in different replicas must not target the same effective address. This applies to both
data accesses and instruction fetches:

\[
\forall a \in \mathcal{O},\ \forall i \neq j:\quad \phi_i(a) \neq \phi_j(a)
\]

where $\mathcal{O}$ is the set of logical memory objects (variables, stack frames,
heap allocations, \emph{and instruction addresses}), and $\phi_r$ denotes the address 
mapping function of replica $r$ that maps a logical object $a$ to its \emph{effective address}.

\textbf{Effective Address.} 
The \emph{effective address} is the address actually used for memory access or 
instruction fetch in the context of the comparison:
\begin{itemize}
    \item physical address in bare-metal systems without an MMU;
    \item virtual or replica-specific normalized address in virtualized, emulated,
          or time-multiplexed executions on a single physical core.
\end{itemize}

\paragraph{Theorem 4 (Early Detection via Address Non-Aliasing).}
Let a DME system consist of $N \ge 2$ replicas with structurally independent
address mappings $\phi_r$. Consider a memory access or instruction fetch 
(load, store, or instruction fetch) to the same logical object $a$ performed 
by all replicas within the same execution slice.

If during execution the effective addresses coincide:
\[
\phi_1(a) = \phi_2(a) = \cdots = \phi_N(a),
\]
then this constitutes a violation of the Address Non-Aliasing Invariant and indicates
a fault (or a violation of structural independence). 

In particular, if the program counters of all replicas become equal 
($\mathit{PC}_1 = \mathit{PC}_2 = \cdots = \mathit{PC}_N$), this is a severe case of 
address-space collapse and must immediately trigger an exception.

Such violations are detectable \emph{immediately} at the moment the address 
is computed or the instruction is fetched, prior to any semantic divergence 
in canonical traces.

\paragraph{Proof (informal).}
Under correct DME compilation, each replica receives a distinct address mapping
$\phi_r$. Therefore, the same logical instruction or data object resides at 
different effective addresses across replicas.

Consider a fault that corrupts a pointer or return address (for example, via 
buffer overflow) such that all replicas compute the same erroneous effective 
address:
\[
p^{(1)} = p^{(2)} = \cdots = p^{(N)} = \textit{addr}.
\]
Typical example — corruption of a return address:
\begin{verbatim}
 /* buffer overflow */
 return_address = corrupted_value;  
\end{verbatim}

Under correct structural independence such coincidence must not occur. 
Equality of effective addresses (including PCs) across replicas therefore 
violates the invariant and can be detected immediately by a dedicated 
address comparator.
\hfill $\square$

\paragraph{Corollary 4.1 (Early Detection of Return Address Corruption).}
When a return address is corrupted with the same erroneous value in all replicas 
(e.g., due to buffer overflow or systematic fault), all replicas will attempt 
to fetch the next instruction from the same effective address. This leads to 
$\mathit{PC}_1 = \mathit{PC}_2 = \cdots = \mathit{PC}_N$, which is immediately 
detected as a violation of the Address Non-Aliasing Invariant — often earlier 
than semantic trace divergence.

\paragraph{Interpretation.}
This mechanism introduces a third detection dimension in DME:

\begin{itemize}
    \item \textbf{Semantic layer}: divergence of canonical traces,
    \item \textbf{Structural layer}: enforcement of address-space independence 
          (including PC non-equality),
    \item \textbf{Address non-aliasing layer}: runtime enforcement of distinct 
          effective access targets for both data and instructions.
\end{itemize}

The address non-aliasing layer is particularly effective against correlated 
pointer and control-flow faults (such as identical return address corruption), 
providing zero-latency detection based purely on structural violations.

\paragraph{Limitations.}

\begin{itemize}
    \item \textbf{Dependence on layout guarantees.}
    The invariant requires strict enforcement of non-overlapping address mappings.
    Any compiler or linker behavior that produces identical mappings for some
    addresses may lead to false positives.

    \item \textbf{Legitimate shared regions.}
    Memory regions intentionally shared across replicas (e.g., memory-mapped I/O,
    communication buffers) violate the invariant and must be excluded.

    \item \textbf{Partial coverage.}
    The mechanism detects only faults that result in identical logical addresses.
    If corrupted pointers differ numerically, detection falls back to semantic
    divergence.

    \item \textbf{No semantic validation.}
    Address equality or inequality does not directly encode program correctness.
    The mechanism detects violations of structural assumptions rather than logical errors.

    \item \textbf{Hardware overhead.}
    Continuous comparison of memory access addresses introduces additional hardware
    cost and may impact timing in constrained systems.
\end{itemize}

\paragraph{Summary.}
Address Non-Aliasing Detection strengthens DME by transforming a subset of latent
memory corruption faults into immediately observable structural violations,
reducing detection latency and complementing canonical trace comparison.


\subsection{Detection of Pointer-Semantics Violations}

Structural address-space decorrelation in DME requires
that semantically corresponding objects reside at
different effective addresses in different replicas.

Consequently, pointers referring to corresponding logical
objects must also differ numerically across replicas:

\[
\forall i \neq j : p_i(a) \neq p_j(a)
\]

where \(p_r(a)\) denotes the pointer value referencing
logical object \(a\) in replica \(r\).

This property enables detection of a broad class of
software errors involving invalid pointer semantics.

\paragraph{Null-pointer collapse.}

Consider a fault or software defect that replaces a valid
pointer with zero:

\[
p_1 = p_2 = \cdots = p_N = 0
\]

When replicas attempt to dereference the pointer,
all replicas access the identical effective address:

\[
addr_1 = addr_2 = \cdots = addr_N = 0
\]

This violates the Address Non-Aliasing Invariant and
immediately triggers detection as an address-space
collapse event.

Importantly, detection occurs even if the fault originates
purely from software logic rather than from hardware
corruption.

\paragraph{Value-as-pointer corruption.}

Consider a programming error in which a computed value
is mistakenly stored into a pointer variable:

\[
p := A + B
\]

Since arithmetic results are semantically equivalent
across replicas during correct execution, all replicas
compute the same numerical value:

\[
(A+B)_1 = (A+B)_2 = \cdots = (A+B)_N
\]

Consequently, the corrupted pointer values also become
identical:

\[
p_1 = p_2 = \cdots = p_N
\]

A subsequent load/store operation through the corrupted
pointer causes all replicas to access the same effective
address, violating structural independence and producing
an address-space collapse.

Thus, DME transforms certain classes of latent software
bugs into immediately observable structural violations.

Examples include:

\begin{itemize}
    \item null-pointer dereference,
    \item use of uninitialized pointers,
    \item accidental value-to-pointer assignment,
    \item corrupted function pointers,
    \item invalid return addresses,
    \item pointer truncation or overflow,
    \item pointer arithmetic errors.
\end{itemize}

Interpretation. Address-space decorrelation is therefore
not only a mechanism for correlated fault detection, but
also a runtime validator of pointer semantic correctness.


\section{Formal Correctness Criteria in DME}

Conventional redundancy mechanisms such as lockstep execution and Triple Modular Redundancy (TMR) define correctness solely in terms of equality of architectural states across replicas. While effective against independent faults, this definition admits a class of correlated faults that preserve equality while violating the intended program semantics.

Divergent Multi-Version Execution (DME) extends this notion by explicitly separating two orthogonal aspects: (i) correctness of execution and (ii) conditions required for reliable fault observability. Rather than treating all deviations uniformly, DME introduces a two-level model consisting of a runtime semantic correctness condition and a design-time structural constraint.

\subsection{Semantic Equivalence}

Let \(T_r\) denote the canonical execution trace of replica \(r\), defined as a sequence of layout-independent representations of executed instructions. We define semantic equivalence as:
\[
P_{\text{sem}} = (T_1 = T_2 = \cdots = T_N) \tag{1}
\]
This condition ensures that all replicas execute logically identical instruction sequences and produce identical semantic effects.

Importantly, \(P_{\text{sem}}\) is agnostic to the origin of deviations. Any violation of semantic equivalence captures incorrect execution regardless of whether it arises from transient hardware faults, memory corruption, or compiler-induced errors.

\subsection{Structural Independence}

Let \(\phi_r\) denote the address mapping function for replica \(r\), assigning logical program elements to physical memory addresses. We define structural independence as:
\[
P_{\text{str}} = (\forall r \neq q : \phi_r \neq \phi_q) \tag{2}
\]

This condition ensures that replicas do not share identical memory layouts, eliminating structural correlation in address space.

Structural independence is established at compile time through replica-specific transformations and is invariant during execution. Its purpose is not to define correctness directly, but to ensure that identical numerical perturbations across replicas do not result in identical logical behavior.

\subsection{Correctness Model}

We distinguish between runtime correctness and system-level correctness.

Runtime correctness is defined solely in terms of semantic equivalence:
\[
\text{Correct}_{\text{runtime}} = P_{\text{sem}} \tag{3}
\]

System correctness additionally requires structural independence:
\[
\text{Correct}_{\text{system}} = P_{\text{sem}} \land P_{\text{str}} \tag{4}
\]

This distinction reflects the separation between execution behavior and the conditions that guarantee its reliable verification.

\subsection{Error Condition}

A runtime error is detected when semantic equivalence is violated:
\[
\text{Error} = \neg P_{\text{sem}} \tag{5}
\]

Violation of structural independence does not correspond to a runtime fault, but rather to a configuration or design-time error that may reduce the detectability of faults.

\subsection{Interpretation}

This formulation separates two complementary dimensions:
\begin{itemize}
\item Semantic correctness (\(P_{\text{sem}}\)), capturing all deviations in execution behavior, including control-flow errors, data corruption, and miscompilation.
\item Structural independence (\(P_{\text{str}}\)), ensuring that such deviations manifest as observable divergence between replicas.
\end{itemize}

In contrast to conventional redundancy schemes, which rely solely on state equivalence, DME explicitly models the conditions under which faults become observable. Structural independence does not define correctness by itself, but guarantees that violations of semantic correctness are detectable.

\subsection{Discussion}

Structural independence is enforced at compile time through replica-specific layout transformations and remains invariant during execution. It serves as a prerequisite for fault detectability rather than a runtime correctness condition.

Intuitively, DME narrows the correctness criterion to semantic equivalence while simultaneously preventing fault masking due to structural correlation. As a result, faults originating from diverse sources — including hardware faults, memory corruption, and compiler errors — are uniformly exposed as semantic divergence in canonical execution traces.

This dual perspective allows DME to be interpreted not only as a fault detection mechanism, but also as a general runtime validator of semantic equivalence across diversified execution pipelines.

\subsection{Detectability Guarantee under Structural Independence}

The role of structural independence can be formalized as a condition that guarantees observability of semantic deviations under correlated perturbations.

\textbf{Fault model.} Assume a fault that induces identical numerical perturbations \(\Delta\) across all replicas, affecting either:
\begin{itemize}
\item the program counter (control-flow faults), or
\item data values or memory addresses (data-path faults).
\end{itemize}
Such faults may arise from global disturbances (e.g., voltage fluctuations, clock glitches) or systematic effects.

\begin{theorem}[Detectability under structural independence]
Let a DME system satisfy structural independence \(P_{\text{str}}\). Assume that the address mapping functions \(\phi_r\) are pairwise distinct and that canonicalization removes all layout-dependent information.

If a fault induces identical perturbations \(\Delta\) in all replicas and the perturbation leads to a semantic deviation in at least one replica, then there exist replicas \(i \neq j\) such that:
\[
T_i \neq T_j \tag{6}
\]
and therefore the fault is detectable via violation of \(P_{\text{sem}}\) within a finite number of execution steps.
\end{theorem}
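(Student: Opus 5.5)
The plan is to argue by contradiction, splitting on the two branches of the fault model (control-flow faults and data-path faults) and leaning on the earlier results for each. Suppose a fault induces the same numerical perturbation \(\Delta\) in every replica, and suppose it causes a semantic deviation in at least one replica \(r_0\). Assume, for contradiction, that \(T_1 = T_2 = \cdots = T_N\) for every finite prefix of the execution. Since canonicalization removes all layout-dependent information, each \(T_r\) is a function of the logical instruction sequence executed by replica \(r\) and of the values it computes. Trace equality therefore forces every replica to follow the same logical instruction sequence with the same results. By the Key observation in Section~4, equal traces also give equal register states by induction. Consequently, if \(r_0\) deviates semantically, every replica must deviate in exactly the same logical way.

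\textbf{Control-flow case.} Here \(PC^{(r)} \mapsto PC^{(r)} + \Delta\) for all \(r\). Let \(a\) be the logical instruction being executed, so \(PC^{(r)} = \phi_r(a)\). Identical logical behaviour after the fault would require \(\phi_r^{-1}(\phi_r(a)+\Delta)\) to be canonically the same instruction for all \(r\). I split into two subcases.
\begin{itemize}
\item If \(|\Delta| \ge l/N\) inside a padded region, Theorem~1 applies directly and yields divergence within one slice.
\item If execution leaves the padded region, I use the fact that the \(\phi_r\) are pairwise distinct, including the complementary branch signs and forced fragmentation. The translates \(\phi_r(a)+\Delta\) then land on logical instructions that are not forced to coincide. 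Either they differ, which gives an immediate contradiction, or their coincidence persists only with probability at most \((\gamma\varepsilon)^k\) by Theorem~2, which tends to \(0\).
\end{itemize}
One more branch closes the argument. If all replicas land on the same physical address, the structural comparator of Theorem~4 fires. This is still a detection, but the theorem's conclusion is phrased via \(P_{\text{sem}}\), so in the write-up I must argue that the collapsed replicas then execute identical bytes under distinct data layouts. The data-path case below then produces a trace difference.

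\textbf{Data-path case.} Here \(addr^{(r)} = \phi_r(b) + \Delta\) for some object \(b\). The perturbed address hits a different logical object (or a different offset) in each replica because of the data-address shuffling. The loaded or stored values therefore enter \(T_r\) and differ, except in the coincidence event already bounded by \(\varepsilon\). If \(\Delta\) perturbs a data value directly, the perturbed result appears in the canonical trace. Once it is used as an address, the previous argument applies.

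\textbf{Main obstacle.} The hard step is turning ``pairwise distinct \(\phi_r\)'' into a genuine guarantee that \(\phi_r^{-1}(\phi_r(a)+\Delta)\) differs for some \(r\). Distinctness of whole mappings does not exclude local agreement near \(a\). This is exactly the alignment window \(|\Delta| < l/N\) flagged after Theorem~1. I would first try to strengthen the hypothesis to local distinctness: for every \(a\) and every nonzero \(\Delta\) there exist \(i \ne j\) with \(\phi_i^{-1}(\phi_i(a)+\Delta) \ne \phi_j^{-1}(\phi_j(a)+\Delta)\), justified by the NOP offsets and fragmentation. Failing that, I would state the conclusion as holding almost surely, with ``finite number of steps'' meaning that the probability of non-detection after \(k\) steps decays like \((\gamma\varepsilon)^k\).
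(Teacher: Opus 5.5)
\textbf{There is a genuine gap, and the paper's proof has the same one.} Your closing diagnosis is right, and it is exactly why the proposal does not close. Nothing in the hypotheses turns ``the $\phi_r$ are pairwise distinct'' into ``some replica produces a different canonical entry.'' The paper's proof is a direct one-paragraph argument with no case split. It simply asserts two things: that the same perturbed value ``resolves to different logical instructions or data locations in different replicas,'' and that this yields different $C(I_t,s_t)$. These are precisely the steps you doubt. Elsewhere the paper concedes the alignment window. Its Remark (Limit of detectability) then retroactively assumes that identical perturbations do not preserve semantic equivalence, which is essentially the conclusion. So the statement is not provable as written: under the natural reading of ``semantic deviation'' it is false, and under the reading that Remark supplies it is nearly tautological.

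Some of your own branches fail concretely. The sub-case ``if $\Delta$ perturbs a data value directly, the perturbed result appears in the canonical trace'' gives no divergence. An identical perturbation of a non-address value produces the same wrong value in every replica, so it enters every $T_r$ identically. Address decorrelation only acts if that value is later dereferenced, and nothing forces that. This is an outright counterexample, so the case must be excluded by hypothesis rather than proved. Likewise, the fallback to Theorem~2 cannot give the deterministic existential $T_i\neq T_j$. Its $k$-step decay also presumes a fresh, independent coincidence at every step. Once all replicas land on the same logical instruction in semantically equal states, they continue identically with probability one. So $(\gamma\varepsilon)^k\to 0$ is unavailable in exactly the realignment case you need to exclude.

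\textbf{Problems with your proposed repairs.} Local distinctness cannot be ``justified by the NOP offsets.'' Take a periodic padded region with fixed-width instructions. A shift by any nonzero multiple of the stride $l$ crosses exactly one NOP per period in every replica, whatever its offset. Every replica therefore skips the same number of logical instructions, and the traces stay equal. This also means the Theorem~1 branch you cite fails for $\Delta\in l\mathbb{Z}\setminus\{0\}$, even though $|\Delta|\ge l/N$ there.

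Local distinctness phrased through $\phi_r^{-1}$ is also too weak, so your step ``either they differ, which gives an immediate contradiction'' does not hold. Distinct logical instructions can have identical canonical images, and identical continuations until their paths merge. A hypothesis strong enough to force $T_i\neq T_j$ has to be stated on canonical continuations, which is what the paper's Remark effectively does.

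Your collapse branch does not help either. Under the additive model, the perturbed PCs coincide only if $\phi_i(a)=\phi_j(a)$ already held. Where collapse does occur, the paper's address-space-collapse subsection states that canonical traces remain equal. Such faults are therefore caught only by the structural comparator, which lies outside this theorem's $P_{\text{sem}}$ conclusion.

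The salvageable versions are the ones you name at the end. Either add an explicit canonical-divergence hypothesis, or state an almost-sure result under an explicit randomness model for the layouts. In the second case the bound must account for realignment rather than assume independent per-step coincidences.
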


\begin{proof}[Proof (informal)]
Because structural independence ensures \(\phi_i \neq \phi_j\) for all \(i \neq j\), identical numerical perturbations \(\Delta\) applied to different replicas are mapped through distinct address spaces. Consequently, the same perturbed value (e.g., program counter or pointer) resolves to different logical instructions or data locations in different replicas.

This divergence in logical behavior produces different canonical instruction representations \(C(I_t, s_t)\) across replicas. Since canonicalization removes layout-dependent artifacts while preserving semantic effects, any difference in executed instructions, loaded values, or computed results results in different canonical traces.

Therefore, for at least two replicas \(i\) and \(j\), the traces diverge (\(T_i \neq T_j\)), which violates \(P_{\text{sem}}\) and is detected by the comparison mechanism. Detection occurs within a bounded number of execution steps determined by the trace comparison granularity. $\Box$
\end{proof}

\paragraph{Stronger formulation.}
The same argument extends to data-path faults. Suppose a fault induces
identical perturbations to register values or memory operands across
all replicas. Because structural independence ensures $\phi_i \neq
\phi_j$, the \emph{same} perturbed numerical value maps to different
logical data locations. Consequently, loaded values differ across
replicas, producing different computed results in ALU operations.
These differences appear in the canonical traces as divergent
$computed\_result$ fields, triggering detection.

Thus, structural independence guarantees not only control-flow
divergence but also \emph{register state divergence} within a bounded
number of steps. The only way for a fault to remain undetected is to
preserve \emph{identical register states and identical instruction
sequences} indefinitely — a scenario whose probability is bounded by
$\rho^{N-1} \cdot p^k$ with $p$ extremely small for $k \ge 2$.


\begin{figure*}
    \centering
    \includegraphics[width=\textwidth]{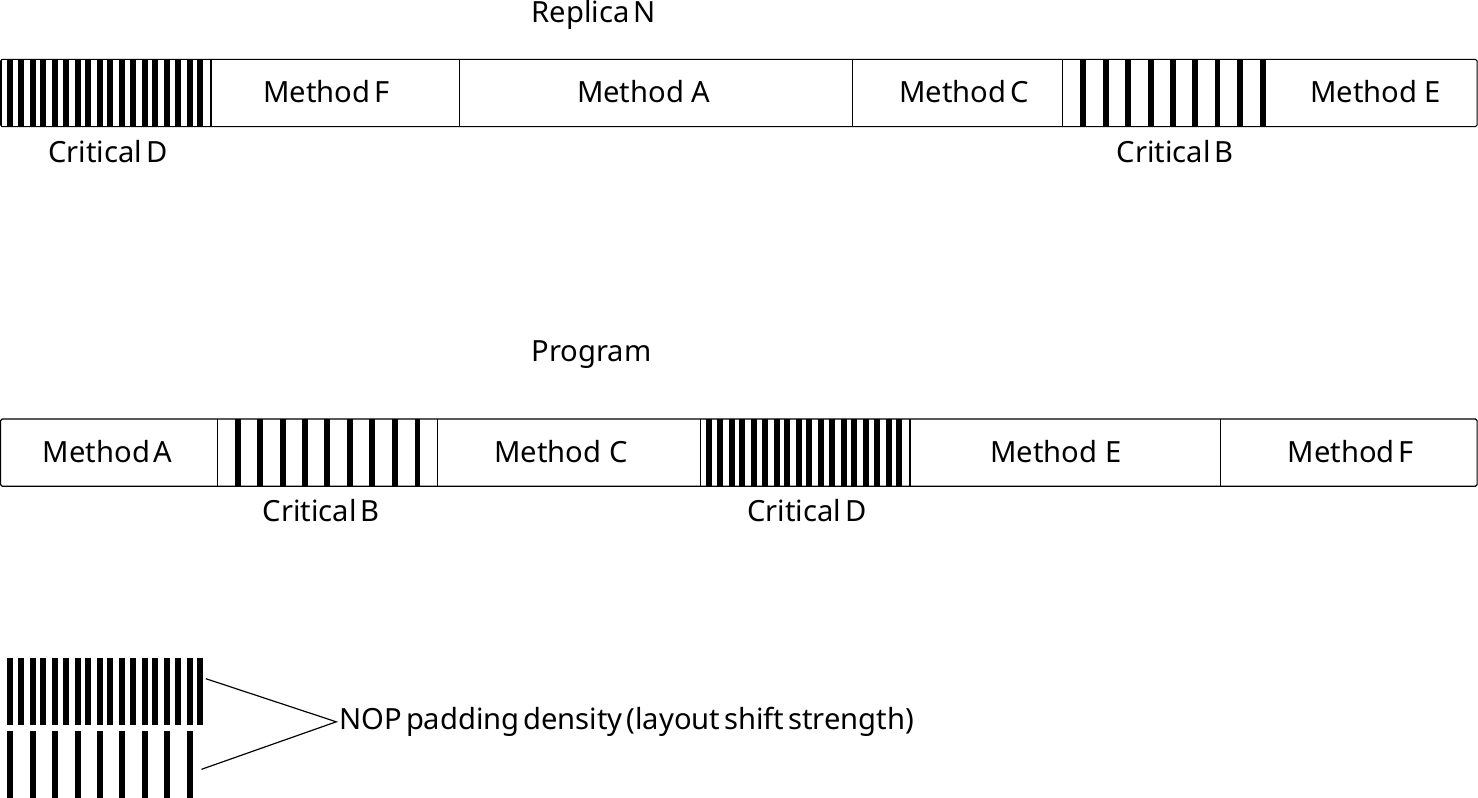}
    \caption{Adaptive DME layout diversification.
Fine-grained instruction padding is applied only
to critical program regions, while the rest of the
code relies on coarse-grained structural addressspace decorrelation. The density of vertical
stripes schematically represents the density of inserted NOP instructions, illustrating increased finegrained layout diversification in critical functions.}
    \label{fig:scheme}
\end{figure*}


\begin{corollary}
In the absence of structural independence (i.e., if \(\exists i \neq j : \phi_i = \phi_j\)), there exists a class of correlated faults for which:
\[
T_1 = T_2 = \cdots = T_N \tag{7}
\]
despite incorrect execution, leading to potential silent data corruption.
\end{corollary}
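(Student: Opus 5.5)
The plan is a direct construction. The statement is existential: it asks for a class of correlated faults, so it suffices to exhibit one non-empty family. Assume replicas \(i \neq j\) with \(\phi_i = \phi_j\).

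\textbf{Step 1: reduce to \(N=2\).} First I would note that \(T_1 = \cdots = T_N\) only needs to be shown for the collapsed pair \(i,j\). For the remaining replicas we restrict the class to faults confined to the collapsed pair. We then argue that such a fault still yields an undetected incorrect execution. This is done either by letting \(N=2\), or more uniformly by taking all \(N\) replicas to share the same mapping \(\phi\), which is the extreme case of \(\neg P_{\text{str}}\).

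A cleaner route, which I would adopt, is to state the corollary for the sub-system of replicas sharing \(\phi\). Equation (7) is then read over those replicas. In particular, when all mappings coincide it holds literally over all \(N\) replicas.

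\textbf{Step 2: construct the fault class.} Take the class of fully correlated perturbations \(\Pi\) that apply the identical numerical change \(\Delta\) to the PC, or to a pointer or register value, in every affected replica at the same slice. Since \(\phi_i = \phi_j\), the replicas have identical binaries and layouts. Their architectural states before the fault are therefore numerically identical: the canonical traces are equal fault-free, and with equal layouts the addresses are equal too.

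\textbf{Step 3: induction on execution steps.} Applying the same \(\Delta\) to identical states yields identical perturbed states \(S'_t\). Each replica is a deterministic function of its state and its (identical) memory image, so the next-state map is the same for both. By induction on steps, the full states stay equal from then on. Hence \(C(I_t,s_t)\) agrees at every step, and \(T_i = T_j\).

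\textbf{Step 4: show the execution is incorrect.} Choose \(\Delta\) to be, for instance, a PC skip over a store. That store must affect the program output, which holds for any non-trivial program. The resulting trace then differs from the fault-free trace, so the shared execution is incorrect while \(P_{\text{sem}}\) holds. This is silent data corruption, exactly the lockstep failure mode described in the Motivation.

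\textbf{Main obstacle.} The hardest step is Step 1. It requires interpreting the weak hypothesis, that only some pair collapses, against the conclusion quantified over all \(N\) replicas. My first attempt is the restriction to the sharing sub-system, or equivalently the choice \(N=2\). I would remark that the stated form is otherwise too strong, because non-collapsed replicas would generally diverge.

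A secondary subtlety is that equal mappings \(\phi_i = \phi_j\) must imply equal code and data images, not merely equal address maps. I would take this as part of the meaning of "identical layout" under independent compilation from the same source.
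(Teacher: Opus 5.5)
Your construction is essentially the paper's own reasoning. The corollary has no separate proof; its justification is the ``Consequence'' list in the Address-Space Collapse subsection (identical $\Delta$ on a collapsed layout gives identical fetches and accesses, hence $T_1=\cdots=T_N$) plus the lockstep scenario of the Motivation, both of which tacitly assume total collapse exactly as your Step~1 restriction does, and your Steps~2--4 make this rigorous via determinism and an explicit output-corrupting $\Delta$.

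One correction to Step~1: the literal, existential statement is not too strong for $N\ge 3$ with a single collapsed pair, since an identical corruption of a pure data value or condition flag that never feeds an address keeps all $N$ traces equal for any layouts. That class never uses $\neg P_{\text{str}}$, however, so your restricted reading is the one that carries the intended content.
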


\textbf{Interpretation.} The theorem establishes that structural independence is a sufficient condition for transforming correlated low-level perturbations into observable semantic divergence. In this sense, structural diversity acts as a fault amplification mechanism: identical numerical faults are mapped to different logical effects, ensuring their detectability.

Thus, while semantic equivalence defines correctness, structural independence guarantees that violations of correctness cannot remain hidden due to correlated execution.

\textbf{Remark (Limit of detectability).} The detectability guarantee established above assumes that identical perturbations do not preserve semantic equivalence across all replicas. If a fault produces identical canonical instruction effects in every replica, then:
\[
T_1 = T_2 = \cdots = T_N
\]
and no divergence is observable. This limitation is inherent to all redundancy-based detection mechanisms, including lockstep execution and TMR. DME mitigates this limitation by introducing structural diversity, which reduces the class of faults that preserve semantic equivalence, but does not eliminate it entirely.

\subsection{Runtime Enforcement of Structural Independence}

While structural independence \(P_{\text{str}}\) is established at compile time, runtime address-level consistency checks (e.g., verifying \(PC_i \neq PC_j\) for semantically equivalent instructions or comparing section base addresses) provide partial enforcement, detecting address-space collapse as defined in Section~4.

\subsection{Execution Slice}

\textbf{Definition.} An \textit{execution slice} \(\mathcal{S}_t\) is the interval 
during which each of the \(N\) replicas executes exactly one logical instruction 
before the canonical hashes are compared. The physical duration of a slice is:

\[
T_{\mathrm{slice}} = N \cdot T_{\mathrm{instr}} + T_{\mathrm{cmp}}
\]

in a time-multiplexed implementation, or simply \(T_{\mathrm{slice}} = T_{\mathrm{instr}}\) 
in a spatially parallel (e.g., FPGA) implementation.

\textbf{Property.} Any fault that causes semantic divergence is detected within 
at most one execution slice. Interrupts and exceptions are handled identically 
across replicas before slice boundaries.

\section{Limitations}

\DME inherits the fundamental limitation of all redundancy-based schemes:

\emph{Faults that preserve identical canonical traces across all replicas remain undetectable.}

Examples:
\begin{itemize}
\item Perturbations that map to equivalent logical instructions
\item Faults affecting only shared microarchitectural components identically
\item Compiler-induced correlated transformations that preserve semantics
\end{itemize}
\DME reduces the probability of such masking but does not eliminate it.

\paragraph{Remark on the undetectable class.}
While undetectable faults theoretically exist, our analysis shows that
they require the preservation of \emph{identical register states}
across all replicas throughout the faulty execution. For a program
with $n$ instructions, the number of distinct register states is
exponential in the number of registers and bit width. The probability
that a random PC perturbation lands on an instruction sequence that
preserves register equality for $k$ steps is bounded by
$\rho^{N-1} \cdot \epsilon^k$, where $\epsilon$ is the probability
that two randomly chosen instructions produce the same result given
identical inputs. For typical ALU operations on 32-bit data,
$\epsilon \approx 2^{-32}$ for addition/subtraction, and even smaller
for operations with carry or flags.

Hence, while not strictly impossible, the undetectable fault class is
\emph{practically negligible} for any $k \ge 2$. This distinguishes
DME from lockstep/TMR, where undetectable correlated faults are not
exponentially suppressed by state width.

\section{Conclusion}

Divergent Multi-Version Execution redefines redundancy: from comparing physical state to comparing logical execution trajectories.

\textbf{Key results:}
\begin{itemize}
\item Deterministic detection for \(|\Delta PC| \geq l/N\) — provable, not probabilistic
\item Dual-layer detection (semantic + structural) catches both divergence and address-space collapse
\item Practical implementations: 
    \begin{itemize}
        \item VM on Cortex-M: 8 KB RAM total, VM occupies 4 KB, remaining 4 KB shared among replicas (2 KB per replica for \(N=2\))
        \item FPGA implementation on Altera Cyclone IV EP4CE6E22C8 operating in dual-core mode. Each core consumes 3000 LUTs and 16 KB of BRAM per replica. Soft processor, designated \textit{3o|||sheet}, features a custom 32-bit instruction set designed for industrial controller workloads
    \end{itemize}
\end{itemize}
DME provides a formally analyzable, configurable, and hardware-independent fault detection mechanism suitable for safety-critical embedded systems.

\subsection{Relation to Multi-Variant Execution Environments (MVEE)}

Unlike Multi-Variant Execution Environments (MVEE), which primarily employ diversification to reduce exploit portability while validating coarse-grained external behavior (e.g., system calls), DME uses structural address-space decorrelation as a fault-observability mechanism.

In MVEE systems, diversified layouts are mainly a security-oriented property and do not generally constitute a formal runtime correctness invariant. Correlated instruction-level perturbations may therefore remain undetected as long as externally observable behavior remains sufficiently consistent across variants.

In contrast, DME is explicitly designed to detect bit-level corruptions of the program counter caused by physical disturbances (e.g., electromagnetic or clock glitches). Such faults typically shift \(PC\) by a small offset rather than overwriting it with an arbitrary value. To make these small, correlated shifts observable, DME does not rely on passive layout randomness. Instead, it actively modifies branch displacements during compilation: the same logical branch is compiled with opposite signs across replicas (e.g., forward in replica 0, backward in replica 1). Consequently, an identical \(\Delta PC\) perturbation pushes replicas onto semantically different execution paths, guaranteeing detection.

More generally, DME continuously validates semantic equivalence of canonical instruction traces and treats violations of structural independence as explicit runtime fault conditions. These violations include not only control-flow collapse:
\[
PC_1 = PC_2 = \dots = PC_N,
\]
but also data address collapse:
\[
p_1 = p_2 = \dots = p_N,
\]
where \(p_r\) denotes a data pointer in replica \(r\). A null pointer dereference, a corrupted return address, or an arithmetic result mistakenly assigned to a pointer — if identical across all replicas — is detected immediately via the Address Non-Aliasing Invariant, often before any semantic divergence occurs.

Thus, structural diversity in DME is not merely intended to increase exploit diversity, but to guarantee that identical numerical perturbations cannot preserve identical logical execution across replicas — for both code and data addresses.

\subsection{DME vs. CFI}

\begin{itemize}
    \item \textbf{CFI:} single execution, enforces legal control-flow transitions (CFG), cannot detect correlated faults that stay within the CFG or data-pointer errors without illegal jumps.
    \item \textbf{DME:} multiple diversified replicas, compares canonical instruction traces (opcodes, registers, results, load/store values), detects correlated PC faults ($|\Delta PC|\ge l/N$) and identical pointer corruptions via address non-aliasing invariant.
    \item \textbf{Relationship:} CFI $\subseteq$ DME per replica; DME adds cross-replica semantic equivalence and structural address invariants.
\end{itemize}

\subsection{Distinction from EDDI and SWIFT}

EDDI execute identical binaries with identical memory layouts, detecting only \emph{uncorrelated} faults. DME executes independently compiled replicas with decorrelated address spaces. Consequently, DME provides:
\begin{itemize}
    \item Deterministic detection of fully correlated PC faults when \(|\Delta PC| \ge l/N\) (Theorem~1);
    \item Immediate structural violation detection for identical pointer corruptions via the Address Non-Aliasing Invariant.
\end{itemize}
Neither property holds in EDDI or SWIFT.

\section*{Selected References}
\begin{enumerate}
\item S. Manoni et al., ``CVA6-CFI: A First Glance at RISC-V Control-Flow Integrity Extensions,'' \textit{arXiv preprint arXiv:2602.04991}, Feb. 2026. \label{Manoni2026}
\item A. Avizienis, ``The N-Version Approach to Fault-Tolerant Software,'' \textit{IEEE Trans. Software Eng.}, 1985. \label{Avizienis1985}
\item N. Oh et al., ``Error Detection by Duplicated Instructions (EDDI),'' \textit{MICRO-35}, 2002. \label{Oh2002}
\item G. A. Reis et al., ``SWIFT: Software Implemented Fault Tolerance,'' CGO, 2005. \label{Reis2005}
\item M. Abadi et al., ``Control-Flow Integrity,'' CCS, 2005. \label{Abadi2005}
\item J.-M. Dutertre, A. Menu, O. Potin, J.-B. Rigaud, J.-L. Danger, ``Experimental Analysis of the Electromagnetic Instruction Skip Fault Model and Consequences for Software Countermeasures,'' \textit{Microelectronics Reliability}, 2021. \label{Dutertre2021}
\item B. Randell, ``System structure for software fault tolerance,'' \textit{IEEE Transactions on Software Engineering}, vol. SE-1, no. 2, pp. 220-232, June 1975. \label{Randell1975}
\item J. Just and M. Cornwell, ``Review and analysis of synthetic diversity for breaking monocultures,'' in \textit{Proc. 2004 ACM Workshop on Rapid Malcode (WORM '04)}, Washington DC, USA, 2004, pp. 23-32. \label{Just2004}
\item H.-M. Pham, S. Pillement, and S. J. Piestrak, ``Low-overhead fault-tolerance technique for a dynamically reconfigurable softcore processor,'' \textit{IEEE Transactions on Computers}, vol. 62, no. 6, pp. 1202-1215, June 2013. \label{Pham2013}
\item M. Barbiottia et al., ``Dynamic Triple Modular Redundancy in Interleaved Hardware Threads,'' \textit{IEEE Access}, vol. 12, pp. 32456-32470, 2024. \label{Barbiottia2024}
\item S. Binosi et al., ``The Illusion of Randomness: An Empirical Analysis of Address Space Layout Randomization Implementations,'' in \textit{Proc. ACM SIGSAC Conf. Computer and Communications Security (CCS '24)}, 2024. \label{Binosi2024}
\item A. Avizienis, J. C. Laprie, B. Randell, and C. Landwehr, ``Basic concepts and taxonomy of dependable and secure computing,'' \textit{IEEE Transactions on Dependable and Secure Computing}, vol. 1, no. 1, pp. 11-33, Jan.-March 2004. \label{Avizienis2004}
\item M. Amel Solouki, S. Angizi, and M. Violante, ``Dependability in Embedded Systems: A Survey of Fault Tolerance Methods and Software-Based Mitigation Techniques,'' \textit{arXiv preprint arXiv:2404.10509}, April 2024. \label{Solouki2024}
\item A. A. Malik, H. Mihir, and A. Aysu, ``Honest to a Fault: Root-Causing Fault Attacks with Pre-Silicon RISC Pipeline Characterization,'' \textit{arXiv preprint arXiv:2503.04846}, March 2025. \label{Malik2025}
\item ``Improved address space layout randomization,'' Google Patents DE112017002277T5, 2017. \label{GooglePatent2017}
\end{enumerate}

\end{document}